\documentclass[letterpaper, 10 pt, conference]{ieeeconf}

\usepackage[T1]{fontenc}
\usepackage{cite}
\usepackage{amsmath,amssymb}
\usepackage{graphicx}
\usepackage{booktabs}
\usepackage{url}
\makeatletter\let\labelindent\relax\makeatother
\usepackage{enumitem}
\usepackage{xcolor}
\usepackage{soul}
\newcommand{\ap}[1]{#1}
\newcommand{\apx}[1]{#1}
\newcommand{\vf}[1]{#1}

\newcommand{\sbl}[1]{#1}
\newcommand{\sblx}[1]{#1}
\usepackage{tikz}
\usepackage{algorithm}
\usepackage{algpseudocode}
\usepackage[hidelinks]{hyperref}
\hypersetup{
pdftitle={Underwater Barrier Sensing Under a Network False-Alarm Budget},
pdfauthor={Hesam Mosalli, Stephane Blouin and Amir G. Aghdam},
pdfkeywords={Sensor placement; minimax optimization; resource allocation;
underwater acoustics; barrier coverage}}

\IEEEoverridecommandlockouts   
\newtheorem{theorem}{Theorem}
\newtheorem{proposition}{Proposition}

\newtheorem{corollary}{Corollary}
\newtheorem{remark}{Remark}

\newcommand{\Rr}{\mathbb{R}}
\newcommand{\D}{\mathcal{D}}
\newcommand{\Cc}{\mathcal{C}}
\newcommand{\Gpath}{\mathcal{G}}
\newcommand{\Qset}{\mathcal{Q}}
\newcommand{\dd}{\,\mathrm{d}}
\newcommand{\Bs}{\Psi}
\newcommand{\Ex}{\mathbb{E}}
\newcommand{\haz}{\eta}

\newenvironment{IEEEkeywords}
{\par\vspace{0.3em}\noindent\textbf{Index Terms---}}
{\par\vspace{0.3em}}

\newcommand{\ALLOCGAIN}{14.9}

\newcommand{\CVARGAIN}{18.2}
\newcommand{\CHANW}{120}
\newcommand{\CHANL}{4}
\newcommand{\CHANDEP}{2000}
\newcommand{\FREQ}{300}
\newcommand{\ISOERR}{0.06}
\newcommand{\LANEX}{30}
\newcommand{\LANEDB}{9}
\newcommand{\LANEW}{20}
\newcommand{\NSLOTS}{20}
\newcommand{\NDEPTHS}{5}
\newcommand{\NCAND}{100}
\newcommand{\NBUD}{6}
\newcommand{\NSUBSETS}{1{,}192{,}052{,}400}
\newcommand{\ALARMRATE}{6}
\newcommand{\BUDGET}{8.333\times 10^{-3}}
\newcommand{\QUNIF}{1.389\times 10^{-3}}
\newcommand{\DTUNIF}{-7.23}
\newcommand{\EPS}{0.2}
\newcommand{\TAU}{1.609}
\newcommand{\QMAX}{1\times 10^{-2}}
\newcommand{\QMAXFA}{7.2}
\newcommand{\ACUT}{-32.3}
\newcommand{\TRUNCBOUND}{1.9\times 10^{-4}}
\newcommand{\HZ}{25}
\newcommand{\GRIDDESC}{41\times201\times81=667{,}521}
\newcommand{\SIGMA}{6}
\newcommand{\TW}{250}
\newcommand{\DWELL}{5}
\newcommand{\SPEED}{1.5}
\newcommand{\DI}{6}

\newcommand{\PSIUNIF}{3.040}
\newcommand{\PSIALLOC}{3.494}
\newcommand{\PSISPREAD}{0.874}

\newcommand{\MISSUNIF}{4.782}
\newcommand{\MISSALLOC}{3.038}
\newcommand{\CVARUNIF}{0.537}
\newcommand{\CVARALLOC}{0.635}

\newcommand{\PLACEFACTOR}{3.5}

\newcommand{\UNIFITERS}{30}
\newcommand{\JOINTLO}{3.494}
\newcommand{\JOINTHI}{23.909}

\newcommand{\RESHZ}{100}
\newcommand{\RESHZFINE}{12.5}
\newcommand{\RESSPAN}{6.2}
\newcommand{\RESLAST}{1.4}

\title{\textbf{Underwater Barrier Sensing Under a Network False-Alarm Budget}}

\author{Hesam~Mosalli, Stephane Blouin and Amir G. Aghdam
\thanks{This work was supported by Defence Research and Development Canada.}
\thanks{H. Mosalli and A. G. Aghdam are with Senseau Technologies Inc., Montr\'eal, QC, Canada (e-mail: mosalli.h@gmail.com; amir\_aghdam@yahoo.com). S. Blouin is with the Department of Electrical Engineering, \'Ecole de technologie sup\'erieure, Montr\'eal, QC, Canada (e-mail: blouin.s@etsmtl.ca).}
\thanks{The authors acknowledge the use of Grammarly and AI agents in refining the writing of this manuscript.}
}

\begin{document}
\maketitle

\begin{abstract}
We study joint sensor placement and detector operating-point allocation for guarding a region against an adversarial crossing. Each sensor may adjust its detection threshold, while all sensors draw on a shared network-level false-alarm budget. For a Gaussian signal-excess model of a square-law detector, we give a necessary and sufficient condition for the negative-log miss hazard to be concave in a sensor's false-alarm probability. At fixed deployment, this yields a concave max-min allocation whose saddle point equalizes marginal barrier gain per unit alarm against a worst-case occupation measure. A perspective reformulation supplies linear cuts for the mixed-integer joint problem, and a separate certificate lower-bounds the continuum barrier strength independently of the graph used for path search. On a deep-water case study based on the canonical Munk profile, reallocating a fixed alarm budget raises the graph design objective by \ALLOCGAIN{}\,\% at unchanged sensor count and alarm load, and the certificate establishes the barrier requirement on the interpolated continuum field, which the graph value alone cannot.
\end{abstract}

\begin{IEEEkeywords}
Sensor placement, minimax optimization, resource allocation, underwater acoustics, barrier coverage.
\end{IEEEkeywords}

\section{Introduction}
\label{sec:intro}

Monitoring a strait or harbor often relies on a fixed network of sensors, deployed once and left to detect subsequent crossings. Full coverage of the region would require \ap{a large number of nodes to ensure} that no crossing goes undetected. The network's capability depends on two main decisions: sensor placement and the sensitivity of each detector. The former is well studied, but the latter is usually predetermined by the sensing model. This paper seeks to reintroduce detector sensitivity as a design variable, confronting an adversary who selects the crossing route.

Classical strong and $k$-barrier coverage ask whether every crossing meets one or $k$ sensing regions~\cite{Kumar2005,Wu2016}, and a large literature softens that binary predicate through graded barrier quality~\cite{Chen2009Quality} and probabilistic $\varepsilon$-barriers~\cite{Li2012,chen2013}. Underwater work carries these notions into three dimensions \ap{and} makes the sensing radius attenuation-dependent~\cite{Barr2008,Chang2019}. In parallel, path-based exposure measures, \ap{defined in terms of sensing intensity or target-detection probability,} support sensor placement that maximizes the \ap{minimum exposure along any} crossing~\cite{Meguerdichian2001Exposure,Amaldi2012}\ap{.} Hamilton-Jacobi solvers \ap{provide the corresponding} continuous best response~\cite{Gilles2020SEG}\ap{, while} seabed placement \ap{approaches for} stochastic traffic optimize expected rather than worst-case detection~\cite{Kim2025Barrier}, \ap{with recent work also incorporating} false alarms into a sensor-availability model~\cite{Kim2026ACC}.

Most of these formulations fix the detector operating point once the sensing model is specified. \ap{However, the exceptions are particularly relevant to the present work.} Threshold-based detection has been combined with optimal \ap{sensor placement} for passive acoustic diver detection~\cite{Stolkin2009}\ap{;} a scenario-dependent \ap{detection threshold has been jointly optimized} with hydrophone placement against a worst-case track~\cite{Molyboha2012}; \ap{sensor placement subject to} explicit detection and false-alarm requirements has been \ap{considered} for fusion-based surveillance~\cite{Chang2011Fusion}; and strong barrier coverage has been constructed under \ap{minimum detection-probability} maximum system \ap{false-alarm-probability constraints, with} a common \ap{detection threshold adjusted according} to the number of active nodes~\cite{Zhang2016ICC}.

The coupling addressed in this work is distinct. Each receiver computes a detection statistic \ap{over each} processing interval and declares \ap{a detection when the statistic exceeds a specified threshold. Lowering this} threshold improves \ap{the detection} probability across all ranges \ap{but} also increases the false-alarm rate. \ap{Importantly, this tradeoff arises at the network level:} alarms are \ap{drawn} from a \ap{single shared quota, as a} watch officer, fusion center, or downstream tracker can tolerate only a finite number of nuisance alarms per hour. \ap{Consequently, the network-wide} alarm budget imposes a linear constraint on the aggregate false-alarm probabilities of all deployed sensors. Detection thresholds are therefore allocated \ap{jointly} with sensor placement against the continuous worst-case \ap{crossing. In this formulation,} the false-alarm \ap{budget} becomes a \ap{divisible resource allocated among individual sensors,} rather \ap{than being enforced through} a common threshold calibrated \ap{to satisfy} a global constraint.

\ap{From an optimization perspective, this formulation can be viewed as} an interdiction problem: a divisible \ap{resource budget is allocated to increase the cost of an adversary's least-cost} path. Shortest-path interdiction is well \ap{established}~\cite{Israeli2002,Tayyebi2023}, but \ap{typically considers} discrete graphs with abstract interdiction costs. The closest control-theoretic analogue allocates detector dwell time across sites \ap{subject to per-sensor budgets}~\cite{LeNy2009}. \ap{In contrast, in the present formulation,} false-alarm probability is \ap{allocated from a single shared budget, the relationship between budget allocation and path cost is determined} by detection theory rather than assumed, and the \ap{adversary's path} problem is continuous.

\ap{This work (i)} formulates the joint problem of sensor placement and operating-point allocation under a network-level false-alarm constraint; \ap{(ii) establishes} a necessary and sufficient curvature condition for concavity of the negative-log miss hazard; \ap{(iii) characterizes} the allocation \ap{for a fixed sensor placement through a saddle-point formulation} and a marginal-allocation principle; \ap{(iv) develops} a perspective reformulation to bracket the joint problem; and \ap{(v) provides} a certificate that validates the continuum barrier field independently of the discretization. A deep-water case study demonstrates the quantitative improvement achievable by reallocating a \ap{fixed-alarm} budget.

\vf{The remainder of the paper is organized as follows. Section~\mbox{\ref{sec:problem}} formulates the deployment, the detection model and the design problem. Section~\mbox{\ref{sec:structure}} establishes the curvature condition, the concavity-preserving truncation, and the saddle-point and water-filling characterization of the allocation. Section~\mbox{\ref{sec:algorithm}} develops the perspective reformulation and the continuum certificate. Section~\mbox{\ref{sec:case}} applies the method to a deep-water chokepoint, reporting the truncation parameters, the design comparison and the continuum certificates. Section~\mbox{\ref{sec:conclusion}} concludes and states the limitations of the present treatment.}

\section{Problem Statement}
\label{sec:problem}

\subsection{Deployment and crossings}

\begin{figure}[t]
\centering
\begin{tikzpicture}[x=1cm,y=1cm,scale=0.92,
                    every node/.style={font=\scriptsize}]
  \def\Wy{4.9}\def\Hz{1.5}\def\ox{1.05}\def\oy{0.72}
  \fill[blue!30,fill opacity=0.35]
        (0,0) -- (\ox,\oy) -- (\ox,-\Hz+\oy) -- (0,-\Hz) -- cycle;
  \fill[orange!45,fill opacity=0.35]
        (\Wy,0) -- (\Wy+\ox,\oy) -- (\Wy+\ox,-\Hz+\oy) -- (\Wy,-\Hz) -- cycle;
 
  \draw[blue!60!black] (0,0) -- (\ox,\oy) -- (\ox,-\Hz+\oy) -- (0,-\Hz) -- cycle;
  \draw[orange!75!black]
        (\Wy,0) -- (\Wy+\ox,\oy) -- (\Wy+\ox,-\Hz+\oy) -- (\Wy,-\Hz) -- cycle;
  \node[blue!60!black,anchor=east] at (0.5,.5) {$\partial_{\mathrm{in}}$};
  \node[orange!75!black,anchor=west] at (\Wy+\ox-0.04,-0.30+\oy)
        {$\partial_{\mathrm{out}}$};
  \fill[black!8] (0.5*\Wy,0) -- (0.5*\Wy+\ox,\oy)
                 -- (0.5*\Wy+\ox,-\Hz+\oy) -- (0.5*\Wy,-\Hz) -- cycle;
  \foreach \t in {0.12,0.40,0.68,0.95}
    \foreach \d in {0.10,0.30,0.50,0.70,0.90}
       \fill[black] (0.5*\Wy+\t*\ox,-\d*\Hz+\t*\oy) circle (0.036);
  \node[anchor=south east,inner sep=1pt]
        at (0.5*\Wy+0.10*\ox,-0.10*\Hz+0.12*\oy+0.1) {$\Cc$};
  \draw[red!70!black,thick,->] plot[smooth,tension=0.7] coordinates
       {(0.368,-0.198) (1.50,-0.55) (2.90,-1.32) (3.70,-1.28)
        (4.60,-0.75) (5.635,-0.321)};
  \node[red!70!black,anchor=south] at (2.08,-0.82) {$\gamma$};
  \draw[gray,dashed] (\ox,-\Hz+\oy) -- (\Wy+\ox,-\Hz+\oy);
  \draw[gray,dashed] (0,-\Hz) -- (\ox,-\Hz+\oy);
  \draw[gray] (0,0) rectangle (\Wy,-\Hz);
  \draw[gray] (\ox,\oy) -- (\Wy+\ox,\oy) -- (\Wy+\ox,-\Hz+\oy);
  \draw[gray] (0,0) -- (\ox,\oy);
  \draw[gray] (\Wy,0) -- (\Wy+\ox,\oy);
  \draw[gray] (\Wy,-\Hz) -- (\Wy+\ox,-\Hz+\oy);
  \begin{scope}[shift={(-1.22,-0.95)}]
    \draw[->] (0,0) -- (0.58,0) node[anchor=north west,inner sep=1pt] {$y$};
    \draw[->] (0,0) -- (0.34,0.233) node[anchor=south west,inner sep=0pt] {$\xi$};
    \draw[->] (0,0) -- (0,-0.42) node[anchor=north,inner sep=1pt] {$z$};
  \end{scope}
\end{tikzpicture}
\caption{The guarded domain $\D$, schematic: a crossing $\gamma$ from
$\partial_{\mathrm{in}}$ to $\partial_{\mathrm{out}}$, and candidate sites $\Cc$.}
\label{fig:domain}
\end{figure}

Let $\D\subset\Rr^{3}$ be the guarded body of water, a bounded \sbl{domain with compact} closure $\overline{\D}$ and coordinates $u=(y,\xi,z)$, where $y$ is the crossing direction, $\xi$ is the cross-channel coordinate and $z$ is depth. The intruder \ap{enters the domain} through $\partial_{\mathrm{in}}=\partial\D\cap\{y=0\}$ and \ap{aims to} reach $\partial_{\mathrm{out}}=\partial\D\cap\{y=L\}$, \ap{with both sets assumed to be} nonempty. Admissible crossings are the Lipschitz paths $\gamma:[0,1]\to\overline{\D}$ \ap{connecting the two boundary faces;} that is $\Gpath=\{\gamma\in\mathrm{Lip}([0,1];\overline{\D}):\gamma(0)\in\partial_{\mathrm{in}},\ \gamma(1)\in\partial_{\mathrm{out}},\ \ell(\gamma)\le\ell_{\max}\}$, \ap{where} $\ell(\gamma)$ \ap{denotes the arclength of \mbox{$\gamma$}, and} $\ell_{\max}$ \ap{is a} transit-endurance bound \ap{introduced only to ensure} compactness in Theorem~\ref{thm:saddle}. Sensors may be moored only \ap{at the} $M$ \ap{candidate sites} $\Cc=\{c_1,\dots,c_M\}$, and a deployment \ap{is represented by} $x\in\{0,1\}^{M}$\ap{, subject to} $\sum_m x_m\le n$. Fig.~\ref{fig:domain} \ap{illustrates} the setting. The intruder \ap{travels} at a fixed speed $v$, \ap{assumed common across} all sensors. Finally, $\omega$ denotes the ocean state, \ap{comprising} a sound-speed profile \ap{and} a noise field.

\subsection{Detection threshold as a decision variable}

Consider an incoherent square-law detector \ap{over} a dwell \sbl{time} of \ap{duration} $\Delta t$ and bandwidth $W$, so \ap{that the detection} statistic sums $N=2\Delta t\,W$ real samples. Under \ap{the noise-only hypothesis, the statistic} is distributed as $\sigma_n^{2}\chi^{2}_{N}$, \sbl{where \mbox{$\sigma_n$} is the per-sample noise standard deviation and} $\chi^{2}_{N}$ \ap{denotes a} chi-square \ap{random variable with} $N$ degrees of \ap{freedom. In the presence of} a Gaussian signal \ap{with} power $\sigma_s^{2}$\ap{, the statistic is distributed as} $(\sigma_n^{2}+\sigma_s^{2})\chi^{2}_{N}$. \ap{Define} $\mathrm{snr}=\sigma_s^{2}/\sigma_n^{2}$ \ap{as} \vf{the in-band power ratio}\ap{, expressed as a} linear quantity rather than \ap{in decibels}, and standardize \ap{the statistic under} the \ap{null hypothesis.}

\ap{For the value used in this work} ($N=500$)\ap{,} we adopt throughout the large-$N$ normal approximation to the standardized \ap{chi-square distribution. Under this approximation,} a threshold $\lambda$ \ap{yields} \sbl{the per-dwell false-alarm probability} $q\sblx{:=\mathbb{P}(\text{statistic}>\lambda\mid\text{noise only})}\simeq Q(\lambda)$, where $\Phi$ \ap{denotes} the standard normal \ap{cumulative distribution} function, $\phi$ its \ap{density function,} and $Q=1-\Phi$ its \ap{upper-tail function. Under} the \ap{alternative hypothesis, the standardized statistic} has mean $d$ and standard deviation $1+\mathrm{snr}$, \ap{with deflection:}
\begin{equation}
d=\frac{N\sigma_s^{2}}{\sigma_n^{2}\sqrt{2N}}=\sqrt{\Delta t\,W}\,\mathrm{snr}.
\label{eq:defl}
\end{equation}
Hence, \sbl{the per-dwell detection probability is} $p\simeq Q\bigl(\frac{\lambda-d}{1+\mathrm{snr}}\bigr)$, which reduces to $Q(\lambda-d)$ for $\mathrm{snr}\ll1$. Following the sonar convention~\cite{Urick1983}\ap{, we} define the detection threshold $\mathrm{DT}$ as the in-band $\mathrm{snr}$, \ap{expressed in} decibels, at which $p=1/2$. \ap{Since \mbox{\mbox{$Q(0)=1/2$}}, the} argument of $Q$ \ap{must vanish. Thus, the denominator} $1+\mathrm{snr}$ \ap{does not affect this condition, yielding} $d=\lambda$ \ap{and hence:}
\begin{equation}
\mathrm{DT}(q)\simeq10\log_{10}\!Q^{-1}(q)-5\log_{10}(\Delta t\,W).
\label{eq:DT}
\end{equation}
Equation~\eqref{eq:DT} \ap{establishes} the mapping between detector sensitivity and false-alarm probability. \ap{Lowering} threshold improves \ap{the detection probability} at every range but \ap{increases} the \ap{false-alarm} rate. At the operating points \ap{considered} below, the normal tail \ap{approximation underestimates} the exact $\chi^{2}_{N}$ \ap{false-alarm probability by a factor of} \mbox{$1.5$}-\mbox{$1.6$}\ap{, corresponding to a} $0.24$-$0.43$\,dB \ap{difference in} $\mathrm{DT}$\ap{. Thus,} the equal-alarm comparison is exact \ap{under the adopted approximation} rather than \ap{under} the exact detector law. For fixed processing parameters\ap{, the} second \ap{term in\mbox{~}\mbox{\eqref{eq:DT}}} is constant\ap{; hence,} the dependence on $q$ is entirely through $10\log_{10}Q^{-1}(q)$. Collecting \ap{all terms} in the passive sonar equation that \ap{are independent of the detection} threshold:
\begin{equation}
\begin{split}
A_m(u;\omega)=\ &\mathrm{SL}-\mathrm{TL}(u\!\to\!c_m;\omega)\\[-2pt]
&-\mathrm{NL}(c_m;\omega)+\mathrm{DI}+5\log_{10}\!(\Delta t W),
\end{split}
\label{eq:Afield}
\end{equation}
the mean signal excess \ap{at} node $m$ \ap{for a target at} position $u$ is $A_m(u;\omega)-10\log_{10}Q^{-1}(q)$. \ap{This quantity} fluctuates \ap{due to variations in} propagation and target \ap{aspect. Following} the standard planning \ap{model, we treat these fluctuations as} Gaussian with standard deviation $\sigma$ and \ap{declare} a detection when \ap{the signal} excess is positive\ap{. The resulting} per-dwell detection probability \ap{is:}
\begin{equation}
p_m(u;q)=\Phi\!\left(\frac{A_m(u;\omega)-10\log_{10}Q^{-1}(q)}{\sigma}\right).
\label{eq:pdet}
\end{equation}

\subsection{Barrier strength}

We \ap{adopt} the standard glimpse model \ap{from} search theory\ap{. A target traversing the region} at speed $v$ presents one conditionally independent detection opportunity per dwell\ap{. Thus,} over an arclength element $\dd s$, node $m$ misses \ap{the target} with probability $(1-p_m)^{\dd s/(v\Delta t)}$. \ap{Although successive} dwells may be correlated in practice\ap{, conditional} independence \ap{provides} a tractable approximation, \ap{and is} applied consistently across all designs \ap{considered. The resulting} \vf{per-arclength detection hazard is:}
\begin{equation}
\haz_m(u;q)=\frac{-\ln\bigl(1-p_m(u;q)\bigr)}{v\,\Delta t}\qquad [\mathrm{m}^{-1}],
\label{eq:hazard}
\end{equation}
and, under conditional independence across nodes and dwells, the accumulated negative-log miss probability along a crossing $\gamma$ is additive:
\begin{equation}
\begin{split}
\psi_m(\gamma;q)&=\int_{\gamma}\haz_m(u;q)\dd s,\\[-2pt]
\Bs(x,q)&=\inf_{\gamma\in\Gpath}\ \sum_{m}x_m\,\psi_m(\gamma;q_m).
\end{split}
\label{eq:psi}
\end{equation}
We \ap{refer to} $\Bs$ \ap{as the} \emph{barrier strength}. The worst-case detection probability is $1-e^{-\Bs}$\ap{. Accordingly, the deployment constitutes} an $\varepsilon$-barrier, \ap{meaning that} every crossing is detected with probability at least $1-\varepsilon$, \ap{if and only if} $\Bs\ge\tau:=\ln(1/\varepsilon)$. In the binary \ap{limit, where} $p_m\to1$ inside a sensing region and $p_m\to0$ \ap{outside it,} the formulation \ap{reduces to} the classical intersection requirement for crossings that \ap{intersect a sensing} region over positive arclength.

\subsection{Design problem}

Let $\bar q$ \ap{denote the maximum} admissible false-alarm \ap{probability per node. For a deployment \mbox{$x$}, define the feasible set of false-alarm allocations as} $\Qset(x)=\{q\in[0,\bar q]^{M}:\sum_m x_m q_m\le B\}$, \ap{where \mbox{\mbox{$B$}} is the network-wide false-alarm budget. The joint design problem is then:}
\begin{equation}
\max_{x\in\{0,1\}^{M},\ \sum_m x_m\le n}\ \ \max_{q\in\Qset(x)}\ \ \sblx{\Bs(x,q)}.
\label{eq:P}
\end{equation}
Fixing $q$ \ap{at} the uniform value $B/n\le\bar q$ and \ap{eliminating} the inner maximization \ap{reduces the problem to} a max-min exposure placement formulation~\cite{Amaldi2012}, which we use as the fixed-operating-point baseline.

\section{Structure of the Allocation Problem}
\label{sec:structure}


\ap{The subsequent analysis hinges} on the curvature of the map $q\mapsto\haz_m(u;q)$, \ap{which characterizes} how the \ap{detection hazard} at a fixed \ap{location} $u$ \ap{varies with the false-alarm} allocation $q$\ap{. This map is not globally concave, but the region in which concavity fails can be characterized exactly. Two competing effects determine its curvature. First, relaxing the detection threshold increases} the signal excess with diminishing returns, \ap{since} $\lambda=Q^{-1}(q)$ \ap{varies increasingly slowly as \mbox{$q$} increases. Second,} the hazard is \emph{convex} \ap{in the} signal excess, \ap{since} $-\ln(1-p)$ \ap{diverges} as $p\to1$. Concavity holds \ap{precisely} when the first effect \ap{dominates} the second\ap{. The following analysis quantifies the tradeoff.} \ap{Define} $h(z)=\phi(z)/Q(z)$ \ap{as the standard normal hazard rate and:}
\begin{equation}
r(z)\ :=\ h(z)-z\ =\ \frac{\phi(z)}{Q(z)}-z\ >\ 0 ,
\label{eq:rdef}
\end{equation}
which is strictly decreasing, with $r(z)\sim|z|$ as $z\to-\infty$ and $r(z)\sim 1/z$ as $z\to+\infty$.

\begin{theorem}
\label{thm:concave}
\ap{Fix a target location} $u$ and a node $m$\ap{. Let} $q<Q(1)=0.159$, and \ap{define} $\lambda=Q^{-1}(q)>1$, $z=\bigl(A_m(u)-10\log_{10}\lambda\bigr)/\sigma$\ap{,} $\kappa=10/(\sigma\ln 10)$. \ap{Then the map} $q\mapsto\haz_m(u;q)$ \ap{is strictly increasing. Moreover,} it is concave at $q$ if \ap{and only if:}
\begin{equation}
\kappa\,r(z)\le\lambda^{2}-1 .
\label{eq:curv}
\end{equation}
\vspace{1pt}
\end{theorem}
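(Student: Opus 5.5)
The plan is a direct second-derivative computation in which all factors of fixed sign are pulled out, so that the sign of $\partial_q^2\haz_m$ reduces exactly to \eqref{eq:curv}. First I rewrite the hazard in the variable $z$. By \eqref{eq:pdet}, $1-p_m=1-\Phi(z)=Q(z)$, so \eqref{eq:hazard} gives
\begin{equation*}
\haz_m(u;q)=\frac{g(z)}{v\,\Delta t},\qquad g(z):=-\ln Q(z).
\end{equation*}
The derivatives of $g$ follow from $Q'=-\phi$ and $\phi'(z)=-z\phi(z)$. They are $g'(z)=h(z)>0$ and
\begin{equation*}
g''(z)=h'(z)=h(z)\bigl(h(z)-z\bigr)=h(z)\,r(z).
\end{equation*}
By \eqref{eq:rdef}, $r(z)>0$, so $g''>0$. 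This is the convexity in signal excess described before the theorem.

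Next I differentiate the chain $q\mapsto\lambda=Q^{-1}(q)\mapsto z=(A_m-10\log_{10}\lambda)/\sigma$. The assumption $q<Q(1)$ gives $\lambda>1>0$, so $\log_{10}\lambda$ is defined and $\lambda\phi(\lambda)>0$. The inverse-function rule gives $d\lambda/dq=-1/\phi(\lambda)$, and $dz/d\lambda=-\kappa/\lambda$. Hence
\begin{equation*}
z'(q)=\frac{\kappa}{\lambda\,\phi(\lambda)}>0.
\end{equation*}
It follows that $\partial_q\haz_m=h(z)z'(q)/(v\Delta t)>0$, which proves strict monotonicity.

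For the second derivative of $z$, I differentiate $1/(\lambda\phi(\lambda))$ in $\lambda$. Using $\phi'=-\lambda\phi$, this derivative equals $(\lambda^2-1)/(\lambda^2\phi(\lambda))$. Multiplying by $d\lambda/dq=-1/\phi(\lambda)$ gives
\begin{equation*}
z''(q)=-\frac{\kappa(\lambda^2-1)}{\lambda^2\phi(\lambda)^2}.
\end{equation*}
This step is where the hypothesis $\lambda>1$ is really used: it makes $z''<0$, which is the diminishing-returns effect.

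Finally I combine the two via $\partial_q^2 g(z(q))=g''(z)\,z'^2+g'(z)\,z''$. After substitution this becomes
\begin{equation*}
v\,\Delta t\;\partial_q^2\haz_m=\frac{\kappa\,h(z)}{\lambda^2\phi(\lambda)^2}\Bigl(\kappa\,r(z)-(\lambda^2-1)\Bigr).
\end{equation*}
The prefactor is strictly positive. Therefore $\partial_q^2\haz_m\le0$ if and only if $\kappa r(z)\le\lambda^2-1$, which is \eqref{eq:curv}.

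I read ``concave at $q$'' as nonpositivity of the second derivative at that point. The main obstacle is only bookkeeping: the chain-rule signs, in particular the sign of $d\lambda/dq$. I will guard against errors by checking that the common factor $\kappa h/(\lambda^2\phi^2)$ factors out cleanly, leaving exactly the expression that is compared in \eqref{eq:curv}.
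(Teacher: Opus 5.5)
Your proposal is correct and follows the paper's proof essentially line for line. The paper likewise writes $\haz_m=-\ln Q(z)/(v\Delta t)$ with $z=A_m/\sigma+G(q)$ and $G=-\kappa\ln\lambda$, uses $h'=h\,r$, computes $G'=\kappa/(\lambda\phi(\lambda))$ and $G''=\kappa(1-\lambda^{2})/(\lambda^{2}\phi(\lambda)^{2})$, and reduces the sign of $\partial_q^{2}\haz_m\propto h(z)\bigl[r(z)(G')^{2}+G''\bigr]$ to \eqref{eq:curv}; your explicit positive prefactor $\kappa h(z)/(\lambda^{2}\phi(\lambda)^{2})$ simply spells out the ``substituting and simplifying'' step the paper leaves implicit.
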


\begin{proof}
\ap{Write} $z=A_m(u)/\sigma+G(q)$\ap{,} $G(q)=-\kappa\ln\lambda$, so \ap{that all dependence on \mbox{$q$} is contained} in $G$, and $\haz=-\ln Q(z)/(v\Delta t)$. Differentiating \ap{with respect to} $z$ gives $\partial_z\haz\propto h(z)$\ap{. Using} the identity $h'(z)=h(z)\,(h(z)-z)=h(z)\,r(z)$\ap{, we obtain} $\partial^{2}_{z}\haz\propto h(z)\,r(z)>0$\ap{. Thus,} the hazard is \ap{strictly increasing} and convex in $z$. \ap{Next, differentiating} $G$, \ap{using} $\dd\lambda/\dd q=-1/\phi(\lambda)$ and $\phi'=-\lambda\phi$, \ap{yields:}
\[
G'=\frac{\kappa}{\lambda\phi(\lambda)}>0,
\qquad
G''=\frac{\kappa\,(1-\lambda^{2})}{\lambda^{2}\phi(\lambda)^{2}}<0
\ \text{ for }\lambda>1.
\]
Therefore, $G$ is \ap{strictly increasing} and concave\ap{, and consequently,} $\haz$ \ap{is strictly increasing} in $q$. \ap{Applying the} chain rule gives $\partial^{2}_{q}\haz\propto \apx{h(z)\bigl[r(z)\,(G')^{2}+G''\bigr]}$\ap{. Since} $\apx{h(z)}>0$\ap{, concavity with respect to \mbox{$q$} holds if and only if} $\apx{r(z)\,(G')^{2}}\le-G''$\ap{. Substituting the expressions for \mbox{$G'$} and \mbox{$G''$} and simplifying} yields~\eqref{eq:curv}.
\end{proof}

The hazard rate \ap{cancels from the concavity condition,} which \ap{explains} why~\eqref{eq:curv} \ap{depends only on} \ap{\mbox{$r(z)$}}. Since $r(z)$ decreases \ap{with} $z$ while $\lambda^{2}-1$ decreases \ap{with} $q$, the \ap{condition has a natural interpretation:} concavity holds \ap{when the target signal is sufficiently strong} relative to \ap{the looseness of the detection threshold.} For each fixed $q$\ap{, concavity} fails below \ap{a corresponding} threshold in $A$\ap{. In this regime,} $r(z)\approx|z|$ grows without bound \vf{as \mbox{\mbox{$z\to-\infty$}}, and the hazard becomes highly sensitive} in relative terms, \ap{although its absolute magnitude remains} negligible under the cap \ap{adopted here. Corollary\mbox{~}}\ref{cor:trunc} \ap{exploits precisely this observation.}

\begin{corollary}
\label{cor:trunc}
Fix a cap $\bar q<Q(1)$\ap{, and} choose $\underline A$ such that~\eqref{eq:curv} holds for every $A\ge\underline A$ and every $q\in(0,\bar q]$. Let $\chi:\Rr\to[0,1]$ be continuous, nondecreasing, and independent of $q$, with $\chi(A)=0$ for $A\le\underline A$\ap{, and} $\chi(A)=1$ for $A\ge A_1$, \ap{for some finite \mbox{$A_1>\underline A$}. Define} $\tilde\haz_m(u;q)=\chi\bigl(A_m(u)\bigr)\haz_m(u;q)$. \ap{Then, for every crossing \mbox{$\gamma$}, the map} $q\mapsto\tilde\psi_m(\gamma;q)$ is concave on $(0,\bar q]$\ap{. Since} $\tilde\haz_m(u;q)\to0$ as $q\to0^{+}$, the continuous extension $\tilde\haz_m(u;0):=0$ \ap{extends this concavity to the entire interval} $[0,\bar q]$\ap{. Consequently,} $q\mapsto\tilde\Bs(x,q)$ \ap{is also concave, since it is the} infimum of concave functions. The truncation is conservative, and uniformly over $\Qset(x)$:
\begin{equation}
\begin{gathered}
0\ \le\ \Bs(x,q)-\tilde\Bs(x,q)\ \le\ \delta,\\[-2pt]
\delta:=n\,\ell_{\max}\sup_{A\le A_1}
\bigl[1-\chi(A)\bigr]\haz(A;\bar q).
\end{gathered}
\label{eq:truncbound}
\end{equation}
\vspace{1pt}
\end{corollary}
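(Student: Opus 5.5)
The proof will proceed pointwise in $u$, then integrate along $\gamma$, then take the infimum over $\Gpath$.

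\emph{Pointwise concavity.} Fix $u$ and $m$ and write $A=A_m(u)$. If $A\le\underline A$, then $\tilde\haz_m(u;\cdot)\equiv0$, which is trivially concave. If $A>\underline A$, then $\chi(A)\in(0,1]$ is a nonnegative constant independent of $q$. By the choice of $\underline A$, condition~\eqref{eq:curv} holds at every $q\in(0,\bar q]$. Since $\bar q<Q(1)$, Theorem~\ref{thm:concave} applies at each such $q$ and gives $\partial_q^2\haz_m\le0$ on $(0,\bar q]$. Hence $\haz_m(u;\cdot)$ is concave there, and so is its nonnegative multiple $\tilde\haz_m(u;\cdot)$.

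\emph{Extension to $q=0$.} As $q\to0^+$ we have $\lambda\to\infty$, so $z\to-\infty$, $p_m\to0$ and $\haz_m\to0$. The continuous extension $\tilde\haz_m(u;0)=0$ is therefore well defined. A function that is concave on $(0,\bar q]$ and continuous at $0$ is concave on $[0,\bar q]$: take the limit in the chord inequality.

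\emph{Integration and infimum.} Since $\tilde\psi_m(\gamma;q)=\int_\gamma\tilde\haz_m\dd s$ is an integral of concave functions against a nonnegative measure, it is concave. For fixed $\gamma$, the sum $\sum_m x_m\tilde\psi_m(\gamma;q_m)$ is separable and concave in $q$ because $x_m\ge0$. Then $\tilde\Bs(x,\cdot)$ is the pointwise infimum over $\gamma\in\Gpath$ of concave functions, and is therefore concave.

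\emph{The bound~\eqref{eq:truncbound}.} Let $F_\gamma(q)=\sum_m x_m\psi_m(\gamma;q_m)$ and $\tilde F_\gamma(q)=\sum_m x_m\tilde\psi_m(\gamma;q_m)$. Since $0\le\chi\le1$, we have $0\le F_\gamma-\tilde F_\gamma$ for every $\gamma$. Also
\[
F_\gamma-\tilde F_\gamma=\sum_m x_m\int_\gamma[1-\chi(A_m)]\haz_m\dd s .
\]
The integrand vanishes when $A_m\ge A_1$. Otherwise $\haz_m(u;q_m)\le\haz(A_m;\bar q)$, by the monotonicity in $q$ from Theorem~\ref{thm:concave} together with $q_m\le\bar q$. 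Here $\haz(A;\bar q)$ denotes the hazard as a function of $A$, consistent with~\eqref{eq:pdet}. Using $\ell(\gamma)\le\ell_{\max}$ and $\sum_m x_m\le n$, this gives $F_\gamma-\tilde F_\gamma\le\delta$ uniformly in $\gamma$ and $q\in\Qset(x)$.

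Inequalities between the functions carry over to their infima. From $\tilde F_\gamma\le F_\gamma$ we get $\tilde\Bs\le\Bs$. From $F_\gamma\le\tilde F_\gamma+\delta$ for all $\gamma$ we get $\Bs\le\tilde\Bs+\delta$.

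\emph{Main obstacle.} The delicate point is the first step. Theorem~\ref{thm:concave} gives only a pointwise second-derivative sign, so I must make sure that $\haz_m$ is $C^2$ in $q$ on $(0,\bar q]$, which holds because $Q^{-1}$ and $\Phi$ are smooth. I must also make sure that the hypothesis ``\eqref{eq:curv} for every $A\ge\underline A$'' covers all $A$ in the support of $\chi$, that is all $A>\underline A$. This is exactly why $\chi$ is required to vanish up to $\underline A$ and to be independent of $q$; a $q$-dependent cutoff would destroy concavity.

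A secondary check is that $\delta$ is finite. On $A\le A_1$, the hazard $\haz(A;\bar q)$ is bounded by its value at $A_1$, since it is increasing in $A$. So the supremum is finite.
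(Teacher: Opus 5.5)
Your proposal is correct and follows the paper's own proof. The paper argues in the same order: pointwise concavity from Theorem~\ref{thm:concave}, preserved by the nonnegative $q$-independent factor $\chi$, by integration and by the infimum; then \eqref{eq:truncbound} from $0\le 1-\chi\le 1$, the vanishing of $1-\chi$ above $A_1$, monotonicity of $\haz$ in $q$, and the bounds $\ell(\gamma)\le\ell_{\max}$ and $\sum_m x_m\le n$. You also spell out steps the paper leaves implicit, namely the $q\to0^{+}$ extension via the chord inequality, $C^2$ regularity in $q$, and finiteness of $\delta$; your one harmless slip is that $\chi(A)$ may vanish for some $A>\underline A$, so it lies in $[0,1]$ rather than $(0,1]$.
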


\begin{proof}
\ap{Where} $\chi=0$, \ap{the truncated hazard vanishes identically} \ap{as a function of \mbox{$q$}. Elsewhere,} it is \ap{a nonnegative,} $q$-independent scaling of a \ap{function that is} concave in $q$\ap{. Concavity is preserved under such scaling, as well as under integration and pointwise} infimum. \ap{For\mbox{\mbox{~}}\mbox{\mbox{\eqref{eq:truncbound}}}, the lower bound follows directly from} $\tilde\haz_m\le\haz_m$. For the \ap{upper bound,} at most $n$ nodes are \ap{active, and the hazard discarded by each active node per unit arclength is} $[1-\chi(A_m)]\haz_m\le\sup_{A\le A_1}[1-\chi(A)]\haz(A;\bar q)$ \ap{since the discarded term vanishes for \mbox{$A\ge A_1$} and} $\haz$ is nondecreasing \ap{in both} $A$ and $q$\ap{. Since every admissible crossing has arclength at most \mbox{$\ell_{\max}$}, the total hazard discarded along a crossing is at most \mbox{$\delta$}. This bound holds for every admissible crossing \mbox{$\gamma$} and every \mbox{$q\in\Qset(x)$}, so taking the corresponding infima yields\mbox{~}\mbox{\eqref{eq:truncbound}}.}
\end{proof}

\ap{Choosing} $\chi$ \ap{to be continuous,} rather than an indicator \ap{function, ensures that} $\tilde\haz$ \ap{remains continuous} on $\overline\D$, \ap{as required in Theorem\mbox{~}}\ref{thm:saddle} for weak-$*$ continuity. On a finite graph, \ap{however, the sharp cutoff} $\chi=\mathbf{1}\{A\ge\underline A\}$ instead \ap{yields} a $q$-independent active set, so the allocation \ap{problem remains} concave \ap{in that setting as well. Thus, the continuous taper} is needed only for Theorem~\ref{thm:saddle}. Section~\ref{sec:case} reports the truncation parameters, the resulting bound $\delta$, and a verified curvature margin, \ap{thereby ensuring that} concavity is \ap{explicitly verified} rather than assumed.

\subsection{The saddle point and the water-filling condition}

A length bound alone does not \ap{ensure that} arbitrary parameterizations of $\Gpath$ \ap{are equi-}Lipschitz\ap{. We therefore} work directly with the object \ap{required for the analysis.} Each \ap{crossing \mbox{$\gamma$}} induces an \emph{occupation measure} $\Theta_\gamma$ on $\overline{\D}$ \ap{defined by} $\apx{\int_{\overline{\D}}} f\dd\Theta_\gamma=\int_\gamma f\dd s$\ap{. Each} such measure is positive \ap{and has} total mass \sbl{equal to the arclength} $\ell(\gamma)\le\ell_{\max}$. Let $\mathcal{M}$ \ap{denote} the weak-$*$ closed convex hull of $\{\Theta_\gamma:\gamma\in\Gpath\}$\ap{. The set \mbox{$\mathcal{M}$}} is convex and, being a weak-$*$ closed set of positive measures of uniformly bounded mass on \ap{the compact domain \mbox{\mbox{$\overline{\D}$}}, is} weak-$*$ compact \ap{by the} Banach-Alaoglu \ap{theorem}\sbl{\mbox{~}\mbox{\cite{Rudin1991}}.} Selecting $\Theta\in\mathcal{M}$ \ap{can be interpreted as} a relaxed randomization \ap{over crossings.}

\ap{Since \mbox{$J$} in}~\eqref{eq:Jdef} is affine and weak-$*$ continuous in $\Theta$, its infimum over $\{\Theta_\gamma\apx{:\gamma\in\Gpath}\}$ equals its infimum over their convex hull\ap{, and by continuity,} over \ap{its} weak-$*$ closure\ap{. Compactness} of $\mathcal{M}$ \ap{then ensures that this last infimum is attained. Hence,} $\inf_{\gamma\in\Gpath}\sum_m x_m\tilde\psi_m(\gamma;q_m)=\min_{\Theta\in\mathcal{M}}J(q,\Theta)$. \ap{Thus, passing} to occupation measures \ap{does not change the optimal value,} although \ap{a} minimizing $\Theta$ need not \ap{correspond to a single crossing.}

\begin{theorem}
\label{thm:saddle}
Fix $x$ and $B>0$\ap{, and assume} the truncated model of Corollary~\ref{cor:trunc}\ap{, with} $\chi$ and every $A_m$ continuous on $\overline\D$\ap{. Define:}
\begin{equation}
J(q,\Theta)=\sum_m x_m\int_{\overline{\D}}\tilde\haz_m(u;q_m)\dd\Theta(u).
\label{eq:Jdef}
\end{equation}
Then, $\max_{q\in\Qset(x)}\min_{\Theta\in\mathcal{M}}J=\min_{\Theta\in\mathcal{M}}\max_{q\in\Qset(x)}J$\ap{, and} a saddle point $(q^{\star},\Theta^{\star})$ exists. For $q\in(0,\bar q]$\ap{, let} $\lambda(q)=Q^{-1}(q)$\ap{,} \ap{define:}
\begin{equation}
D_m(q):=\frac{1}{\lambda(q)\phi(\lambda(q))}\int_{\overline{\D}}\chi\bigl(A_m(u)\bigr)\,h\bigl(z_m(u;q)\bigr)\dd\Theta^{\star}(u)\apx{.}
\label{eq:margval}
\end{equation}
\ap{Extend this definition} to $q=0$ \ap{by setting} $\tilde\haz_m(u;0):=0$ \ap{and defining} the right derivative $D_m(0^+):=\lim_{q\to0^{+}}D_m(q)\in[0,+\infty]$. Then there \ap{exists} $\nu\ge0$ such that, for every selected node ($x_m=1$), \ap{defining} $D_m\apx{=} D_m(q^\star_m)$ \ap{when \mbox{$q^\star_m>0$} and \mbox{$D_m=$}} $D_m(0^+)$ when $q^\star_m=0$:
\begin{equation}
D_m\ \begin{cases}
\le\nu, & q^{\star}_m=0,\\
=\nu, & 0<q^{\star}_m<\bar q,\\
\ge\nu, & q^{\star}_m=\bar q,
\end{cases}
\label{eq:waterfill}
\end{equation}
and $\nu>0$ only if the \ap{false-alarm budget constraint is active.}
\end{theorem}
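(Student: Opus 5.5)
The plan is a minimax theorem followed by KKT conditions for the concave inner problem. First I will establish the minimax equality and existence of a saddle point by Sion's theorem; Ky Fan's version would also serve. The strategy set $\Qset(x)$ is a compact convex subset of $\Rr^{M}$, and $\mathcal{M}$ is convex and weak-$*$ compact, as already argued before the statement. For fixed $\Theta$, the map $q\mapsto J(q,\Theta)$ is concave on $[0,\bar q]^M$. This follows from Corollary~\ref{cor:trunc}: each $q_m\mapsto\tilde\haz_m(u;q_m)$ is concave, and integrating against the positive measure $\Theta$ and summing with weights $x_m\ge0$ preserves concavity. For fixed $q$, $J$ is affine in $\Theta$.

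For continuity, since $\chi$, $A_m$ and hence $\tilde\haz_m(\cdot;q_m)$ are continuous on the compact set $\overline\D$, the map $\Theta\mapsto J$ is weak-$*$ continuous. In $q$, joint continuity of $(u,q)\mapsto\tilde\haz_m(u;q)$ on $\overline\D\times[0,\bar q]$ gives uniform continuity, so $J$ is continuous in $q$ uniformly over the bounded-mass set $\mathcal{M}$. The point that needs care is $q=0$. There $\lambda\to\infty$, so $z\to-\infty$ uniformly in $u$ because $A_m$ is bounded, and therefore $\tilde\haz\to0$ uniformly. With both compactness and joint continuity in hand, the max and min are attained and a saddle point $(q^\star,\Theta^\star)$ exists.

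Next I will derive the water-filling condition. At the saddle point, $q^\star$ maximizes the concave function $F(q):=J(q,\Theta^\star)$ over $\Qset(x)$. This function is separable, $F(q)=\sum_m x_m f_m(q_m)$, and the constraints are linear, so Slater holds when $B>0$ (take $q$ small and positive). The KKT conditions with multiplier $\nu\ge0$ on $\sum_m x_m q_m\le B$ therefore hold. For a separable concave program, optimality of each coordinate gives the three stated cases directly:
\begin{itemize}
\item the one-sided derivative is at most $\nu$ when $q^\star_m=0$;
\item the derivative equals $\nu$ in the interior;
\item the left derivative is at least $\nu$ when $q^\star_m=\bar q$.
\end{itemize}
I will check this by a perturbation argument that shifts budget between a node with larger derivative and one with smaller derivative, or uses slack budget. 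Complementary slackness then gives $\nu>0$ only if the budget constraint is active.

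It remains to identify $f_m'(q)=D_m(q)$, up to the constant $1/(v\Delta t)$, which I take to be absorbed or harmless. Pointwise, Theorem~\ref{thm:concave}'s computation gives $\partial_q\tilde\haz_m=\chi(A_m)\,h(z)\,G'(q)$ with $G'=\kappa/(\lambda\phi(\lambda))$. To differentiate under the integral, I will use dominated convergence. On compact subintervals of $(0,\bar q]$ the derivative is bounded uniformly in $u$, since $h(z)$ is bounded for $z$ bounded above and $\Theta^\star$ has finite mass. This gives $D_m$ as in~\eqref{eq:margval}, with the constants $\kappa$ and $1/(v\Delta t)$ absorbed into $\nu$.

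At $q=0$, concavity makes the right derivative $f_m'(0^+)$ equal to the monotone limit of difference quotients. These equal the limits of $D_m(q)$ as $q\to0^+$, because derivatives of a concave function are monotone, so the limit exists in $[0,+\infty]$. If $D_m(0^+)=+\infty$, then $q^\star_m=0$ is impossible, and the first case is vacuous.

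I expect the main obstacle to be the continuity argument needed for Sion's theorem, namely uniform continuity of $\tilde\haz$ at $q=0$ together with weak-$*$ continuity. The one-sided-derivative bookkeeping at the endpoints is the second point needing care.
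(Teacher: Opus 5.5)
Your proposal is correct and follows the paper's proof. You first apply Sion's minimax theorem on $\Qset(x)\times\mathcal{M}$ to get the minimax equality and a saddle point, then read off~\eqref{eq:waterfill} as the KKT conditions for maximizing the separable concave map $q\mapsto J(q,\Theta^{\star})$, using the chain-rule derivative from Theorem~\ref{thm:concave} with $\kappa/(v\Delta t)$ absorbed into $\nu$. Your additional steps fill in details the paper leaves implicit: uniform continuity of $\tilde\haz_m$ at $q=0$ for Sion's semicontinuity hypothesis, Slater's condition, differentiation under the integral, and the identification of $D_m(0^+)$ with the right derivative.
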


\begin{proof}
$J$ is concave in $q$ by Corollary~\ref{cor:trunc} \ap{and affine} \ap{in \mbox{$\Theta$}. Moreover,} since $\tilde\haz_m$ is continuous on $\overline\D$\ap{, \mbox{$J$} is} weak-$*$ continuous in $\Theta$. The set $\Qset(x)$ is convex and compact, and $\mathcal{M}$ is convex and weak-$*$ compact\ap{. Therefore,} Sion's minimax theorem~\cite{Sion1958} \ap{yields the minimax equality and the existence of} a saddle point.

At a saddle \ap{point,} $q^{\star}$ maximizes the concave map $q\mapsto J(q,\Theta^{\star})$ over $\Qset(x)$\ap{. This map} is differentiable \ap{in} the interior and admits one-sided derivatives at the endpoints\ap{. Hence,}~\eqref{eq:waterfill} is the KKT system\ap{, written in} Fenchel \ap{form, associated with} the single coupling constraint $\sum_m x_mq_m\le B$ with multiplier $\nu$\ap{. Using the chain-rule expression from} Theorem~\ref{thm:concave}:
\[
\frac{\partial\haz_m}{\partial q}
=\frac{1}{v\Delta t}\cdot h(z_m)\cdot\frac{\kappa}{\lambda_m\phi(\lambda_m)},
\]
and absorbing the common factor $\kappa/(v\Delta t)$ into $\nu$\ap{, yields\mbox{~}\mbox{\eqref{eq:waterfill}}.}
\end{proof}

\begin{remark}
Condition~\eqref{eq:waterfill} has the form of a water-filling rule. The factor $1/(\lambda\phi(\lambda))$ \ap{converts a small increase in false-alarm allocation into an increase in} signal excess, \ap{while} the inverse Mills ratio \ap{\mbox{$h(z)=\phi(z)/Q(z)$}} \ap{converts that increase} into marginal \ap{detection hazard. Their} product, integrated \ap{with respect to} $\Theta^{\star}$, \ap{gives} the marginal barrier gain per unit \ap{increase in false-alarm allocation, which is} equalized across selected nodes at an interior optimum. The \ap{boundary case \mbox{$q=0$} is sharp. As} $q\to0^{+}$\ap{, the factor \mbox{$1/(\lambda\phi(\lambda))$}} diverges faster than $h(z_m)$ vanishes\ap{. Thus,} $D_m(0^+)=+\infty$ unless $\Theta^{\star}$ \ap{assigns zero} mass to $\{\apx{u:}\chi(A_m\apx{(u))}>0\}$\ap{. Therefore,} a zero allocation is optimal only for such a node, \ap{in which case} $D_m\equiv0$.
\end{remark}

\section{Certified Joint Design}
\label{sec:algorithm}

\subsection{A perspective reformulation}

Problem~\eqref{eq:P} couples \ap{the} binary placement variable $x$ \ap{and the continuous allocation variable} $q$ through the products $x_mq_m$. \ap{Introducing the} change of variable $b_m=x_mq_m$, \ap{where \mbox{$b_m$} denotes the false-alarm allocation assigned to site \mbox{$m$}, removes this bilinear coupling.}

\begin{proposition}
\label{prop:persp}
\ap{Consider} the truncated model of Corollary~\ref{cor:trunc}\ap{, and let} $\tilde\Bs^{\star}$ \ap{denote the optimal} value of~\eqref{eq:P} \ap{when} $\Bs$ \ap{is replaced} by $\tilde\Bs$. For each $\gamma$ and $m$, the map $(x_m,b_m)\mapsto x_m\tilde\psi_m(\gamma;b_m/x_m)$, extended \ap{by continuity to} $0$ at \ap{\mbox{$(x_m,b_m)=(0,0)$},} is the perspective of the \ap{concave function} $\tilde\psi_m(\gamma;\cdot)$\ap{. It is therefore} jointly concave on $\{x_m\in[0,1],\ 0\le b_m\le\bar q\,x_m\}$. For every tangent point $t\in(0,\bar q]$:
\begin{equation}
x_m\tilde\psi_m(\gamma;b_m/x_m)\ \le\ \tilde\psi_m(\gamma;t)\,x_m+\tilde\psi_m'(\gamma;t)\bigl(b_m-t\,x_m\bigr).
\label{eq:tangent}
\end{equation}
\end{proposition}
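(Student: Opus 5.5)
The plan is to reduce everything to two standard facts about a single concave function of one variable, $f:=\tilde\psi_m(\gamma;\cdot)$ on $[0,\bar q]$. Corollary~\ref{cor:trunc} gives that $f$ is concave on $[0,\bar q]$, continuous there, and satisfies $f(0)=0$. The first fact is that the perspective of a concave function is jointly concave. The second is that a concave function lies below each of its tangent lines.

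\textbf{Joint concavity.} On the set $\{0<x_m\le1,\ 0\le b_m\le\bar q x_m\}$, define $g(x_m,b_m)=x_m f(b_m/x_m)$; the ratio $b_m/x_m$ lies in $[0,\bar q]$, so $g$ is well defined. I would prove concavity directly.

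Take two points $(x,b),(x',b')$ with $x,x'>0$ and $\theta\in[0,1]$. Set $X=\theta x+(1-\theta)x'$, and write
\[
\frac{\theta b+(1-\theta)b'}{X}=\alpha\frac{b}{x}+(1-\alpha)\frac{b'}{x'},\qquad \alpha=\frac{\theta x}{X}\in[0,1].
\]
Concavity of $f$ then gives $Xf(\cdot)\ge \theta x f(b/x)+(1-\theta)x'f(b'/x')$, which is the concavity inequality for $g$.

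Next I would handle the face $x_m=0$. There the constraint forces $b_m=0$, and we set $g(0,0)=0$. This value is the continuous extension, because $|x f(b/x)|\le x\sup|f|\to0$ and $f$ is bounded on the compact interval. The concavity inequality with one endpoint at $(0,0)$ reduces to $g(\theta x,\theta b)\ge\theta g(x,b)$, and this holds with equality since $g$ is positively homogeneous. So $g$ is jointly concave on the whole closed set.

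\textbf{Tangent inequality.} Fix $t\in(0,\bar q]$ and let $f'(t)$ denote the left derivative of $f$ at $t$. This is well defined at $t=\bar q$; in the interior, Theorem~\ref{thm:concave} gives differentiability, so $f'(t)$ is the ordinary derivative.

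Concavity gives the supergradient inequality $f(s)\le f(t)+f'(t)(s-t)$ for all $s\in[0,\bar q]$. Two points need care here:
\begin{itemize}
\item At $s=0$ the inequality holds because $f$ is continuous at $0$.
\item At $t=\bar q$, the left derivative is a valid supergradient over $s\le\bar q$, which is the only range that matters.
\end{itemize}

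For $x_m>0$, substitute $s=b_m/x_m$ and multiply by $x_m\ge0$ to obtain~\eqref{eq:tangent}. At $(0,0)$ both sides equal $0$.

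\textbf{Main obstacle.} The only delicate step is the boundary behaviour: the continuity of the perspective at $x_m=0$ and the validity of the supergradient at $s=0$ and $t=\bar q$. Both rest on $f(0)=0$, $f$ being continuous, and $f$ being finite on $[0,\bar q]$. These properties come from $\tilde\haz_m(u;0)=0$ and dominated convergence with the bound $\ell_{\max}\haz(\cdot;\bar q)$. Near $q=0$ we may have $f'(0^+)=+\infty$, as the remark after Theorem~\ref{thm:saddle} notes. That is why $t$ must be restricted to $(0,\bar q]$.
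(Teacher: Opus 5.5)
Your proof is correct and follows essentially the paper's route: joint concavity of the perspective (which the paper cites from Boyd and Vandenberghe and you verify directly, including the point $(0,0)$), followed by a supergradient inequality. The only difference is cosmetic: the paper reads off the two-dimensional supergradient $\bigl(\tilde\psi_m(\gamma;t)-t\,\tilde\psi_m'(\gamma;t),\ \tilde\psi_m'(\gamma;t)\bigr)$ of the perspective at a point with $b_m/x_m=t$, while you scale the one-dimensional tangent line of $f$ at $t$ by $x_m$, which gives the same inequality and treats the endpoint $t=\bar q$ and the face $x_m=0$ more explicitly than the paper does.
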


\begin{proof}
Joint concavity of the perspective of a concave function is standard~\cite[Sec.~3.2.6]{Boyd2004}\ap{. At a point} $(x_0,b_0)$ \ap{with \mbox{$x_0>0$} and \mbox{$t=b_0/x_0$}, a supergradient} has components $\partial_b=\tilde\psi'(t)$ and $\partial_x=\tilde\psi(t)-t\tilde\psi'(t)$\ap{. The corresponding supergradient inequality rearranges directly} to~\eqref{eq:tangent}.
\end{proof}

\begin{remark}
\label{rem:milp}
\vf{Summing~\mbox{\eqref{eq:tangent}} over $m$ gives one linear cut for each crossing and tangent point. Maximizing $s$ over $x\in\{0,1\}^{M}$, $b$ and $s$, subject to these cuts and the constraints $\sum_mx_m\le n$, $\sum_m b_m\le B$ and $b_m\le\bar q\,x_m$, gives a mixed-integer linear program. Restricting the infimum in~\mbox{\eqref{eq:psi}} to a finite set of crossings can only increase its value, and replacing each concave function $\tilde\psi_m$ by finitely many tangent upper bounds can only increase it further. Hence, the resulting MILP value $\overline{\Bs}$ satisfies $\tilde\Bs^{\star}\le\overline{\Bs}$. Conversely, any feasible design $(x,q)$, when evaluated using the exact truncated barrier strength, yields a value $\underline{\Bs}$ satisfying $\underline{\Bs}\le\tilde\Bs^{\star}$. Finally,~\mbox{\eqref{eq:truncbound}} gives the pointwise bound $\tilde\Bs\le\Bs\le\tilde\Bs+\delta$, and therefore $\tilde\Bs^{\star}\le\Bs^{\star}\le\tilde\Bs^{\star}+\delta$. Combining these inequalities yields $\underline{\Bs}\le\Bs^{\star}\le\overline{\Bs}+\delta$.}
\end{remark}

This \ap{leads to} a cutting-plane scheme in the sense of Kelley~\cite{Kelley1960}. \ap{Starting from} a deployment \ap{with} the uniform operating point $q=B/n$, each \ap{iteration} solves the concave allocation \ap{problem of} Theorem~\ref{thm:saddle} for the \ap{current deployment} $x$, \ap{invokes} a shortest-crossing oracle \ap{to evaluate the resulting design and generate} a new crossing, \ap{and then} re-solves the master MILP \ap{using} all crossings and tangent points generated \ap{thus} far. \ap{A tangent point is added} at each positive \ap{allocation encountered.}

The oracle \ap{used here} is graph-based\ap{; consequently, the resulting bounds} are graph-discretized quantities, \ap{whose relation to the continuous-domain problem is certified} separately by Proposition~\ref{prop:cert}. \ap{If} the master \ap{MILP is terminated} early, its dual bound\ap{, rather than the objective value of the incumbent solution, is used to preserve the validity of the upper bound.}

\subsection{Certifying the continuum, not the grid}

Restricting the infimum in~\eqref{eq:psi} to grid polylines can only \ap{increase its value. Consequently,} the truncated graph value $\Bs_G$ \ap{provides, up to the truncation error} $\delta$\ap{, an upper bound on the continuum value \mbox{$\Bs$} and therefore cannot}\ap{, by itself, establish the feasibility condition} $\Bs\ge\tau$\ap{. Feasibility must instead be established through a separate} certificate.

\ap{For a fixed design, define the aggregate truncated hazard field as} $\haz(u):=\sum_m x_m\,\tilde\haz_m(u;q_m)$\ap{.} The following \ap{result provides a bound directly on} the continuum \ap{field,} independently of the \ap{discretization or the solver used to obtain the candidate design.}

\begin{proposition}
\label{prop:cert}
Let $\haz$ be continuous on $\overline\D$\ap{, and} let $V$ be continuous and piecewise affine on a conforming simplicial partition of $\D$\ap{. Suppose that} $V\le0$ on $\partial_{\mathrm{in}}$ and $\|\nabla V|_S\|\le\inf_{S}\haz$ on every simplex $S$. Then $\Bs\ge\min_{\partial_{\mathrm{out}}}V$. \ap{Conversely, the exact cost of any admissible crossing provides an upper bound on} $\Bs$.
\end{proposition}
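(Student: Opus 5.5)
The certificate is a discrete eikonal subsolution. The plan is to show that the composition $V\circ\gamma$ grows no faster than the hazard accumulated along the crossing $\gamma$. For every admissible $\gamma\in\Gpath$ we aim for
\[
V(\gamma(1))-V(\gamma(0))\ \le\ \int_\gamma \haz\dd s .
\]
Once this holds, $V(\gamma(0))\le0$ (since $\gamma(0)\in\partial_{\mathrm{in}}$) and $\gamma(1)\in\partial_{\mathrm{out}}$ give
\[
\sum_m x_m\tilde\psi_m(\gamma;q_m)=\int_\gamma\haz\dd s\ \ge\ V(\gamma(1))\ \ge\ \min_{\partial_{\mathrm{out}}}V .
\]
Taking the infimum over $\gamma$ then bounds the truncated barrier strength $\tilde\Bs$. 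Since $\Bs\ge\tilde\Bs$ by \eqref{eq:truncbound}, the bound transfers to $\Bs$.

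The first step is to show that $V$ is Lipschitz on each simplex with constant $\inf_S\haz$. On a simplex $S$ the function $V$ is affine with gradient $g_S$, so for $u,w\in S$ we have $|V(u)-V(w)|\le\|g_S\|\,|u-w|\le(\inf_S\haz)|u-w|$. The second step is to get a pointwise derivative bound along the path. $V$ is continuous and piecewise affine on a conforming partition, hence globally Lipschitz on $\overline\D$, and $\gamma$ is Lipschitz. So $t\mapsto V(\gamma(t))$ is Lipschitz and therefore absolutely continuous, and
\[
V(\gamma(1))-V(\gamma(0))=\int_0^1 \tfrac{d}{dt}V(\gamma(t))\dd t .
\]
At almost every $t$ we need $|\tfrac{d}{dt}V(\gamma(t))|\le\haz(\gamma(t))\,|\dot\gamma(t)|$. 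To prove it, fix such a $t$ where both derivatives exist. For small $h$, the segment from $\gamma(t)$ to $\gamma(t+h)$ can be split at finitely many simplex boundaries, using convexity of the simplices and conformity. The Lipschitz bound from the first step applies on each piece, with constant at most $\max$ over the simplices containing $\gamma(t)$ of $\inf_S\haz$. Each such infimum is at most $\haz(\gamma(t))$, because $\gamma(t)\in S$.

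I expect this a.e. chain-rule step to be the main obstacle. The issue is that $\gamma$ may run along faces, or oscillate between simplices, so the naive chain rule through a single gradient is unavailable. The segment-splitting argument above handles this. The only technicality is that a short segment near $\gamma(t)$ meets only simplices containing $\gamma(t)$, which follows from local finiteness and closedness of the partition.

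Integrating, we get $\int_0^1\haz(\gamma)|\dot\gamma|\dd t=\int_\gamma\haz\dd s$, which completes the main inequality. Continuity of $\haz$ is used only to make the infima and the integral well defined; the bound itself needs no approximation. The converse statement is immediate from the definition \eqref{eq:psi}, since the infimum is at most its value at any fixed admissible crossing. Here the word "exact" should be read as: the path integral of the truncated field bounds $\tilde\Bs$, and adding $\delta$ bounds $\Bs$.
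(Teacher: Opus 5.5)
Your proposal is correct and follows the paper's proof. It uses the same inequality $V(\gamma(1))-V(\gamma(0))\le\int_\gamma\haz\dd s$, the condition $V\le0$ on $\partial_{\mathrm{in}}$, the infimum over crossings, and the transfer to $\Bs$ via $\tilde\haz_m\le\haz_m$. The one difference is the integration step: you use absolute continuity of $V\circ\gamma$ and an a.e.\ derivative bound taken over the simplices containing $\gamma(t)$. This makes rigorous the paper's ``integrate piecewise over portions inside each simplex'', which tacitly assumes the crossing splits into finitely many such portions.
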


\begin{proof}
Take an admissible $\gamma$ parameterized by arclength. On each portion of $\gamma$ inside a simplex $S$, $|\tfrac{\dd}{\dd s}V(\gamma(s))|\le\|\nabla V|_S\|\le\haz(\gamma(s))$. Since $V$ is continuous across \ap{simplex interfaces,} these \ap{inequalities can be integrated} piecewise along the \ap{entire crossing. Since \mbox{$\gamma(0)\in\partial_{\mathrm{in}}$}, where \mbox{$V\le0$}, and \mbox{$\gamma(1)\in\partial_{\mathrm{out}}$}, this yields:}
\begin{equation}
\begin{gathered}
V(\gamma(1))-V(\gamma(0))\ \le\ \int_\gamma\haz\dd s,\\[-2pt]
\min_{\partial_{\mathrm{out}}}V\ \le\ V(\gamma(1))\ \le\ \int_\gamma\haz\dd s .
\end{gathered}
\label{eq:certchain}
\end{equation}
\ap{Taking the infimum over \mbox{$\gamma\in\Gpath$} and using \mbox{$\tilde\haz_m\le\haz_m$} gives \mbox{$\Bs\ge\min_{\partial_{\mathrm{out}}}V$}. The} upper bound follows by evaluating \ap{the exact cost of any admissible} crossing.
\end{proof}

Stating the condition \ap{on each} simplex, rather than almost everywhere, \ap{makes the certificate} finitely checkable, \ap{and also covers crossings that run along simplex} interfaces. \ap{Here,} $\haz$ is the multilinear interpolant of its nodal \ap{values, while} $V$ is piecewise \ap{affine} on a Freudenthal subdivision, \ap{so that} $\nabla V$ is constant on \ap{each simplex} $S$. \ap{Within the cell containing \mbox{$S$}, the interpolant \mbox{$\haz$}} is a convex combination of the eight nodal \ap{values and is therefore bounded below by their minimum. Hence,} it suffices to verify $\|\nabla V|_S\|$ against \ap{the minimum of these eight corner values.}

\ap{This} test is conservative \ap{when} $\haz$ varies sharply within a cell, \ap{and} the refinement study \ap{in} Section~\ref{sec:certresults} \ap{quantifies this conservatism.} The candidate $V$ \ap{is obtained} from an anisotropic fast-marching field capped at its minimum exit \ap{value. This capping eliminates} superlevel variation before the simplexwise \ap{verification while leaving} $\min_{\partial_{\mathrm{out}}}V$ unchanged. Scaling by the largest verified ratio then yields an admissible certificate \ap{over the entire domain, with no region excluded from the verification.} For the upper bound, the \ap{multilinear interpolant restricted to each graph segment is a cubic polynomial; hence, two-point Gauss-Legendre quadrature evaluates its line integral exactly.}

\section{Case Study}
\label{sec:case}

\subsection{Setup}

The environment is a deep-water chokepoint \CHANW{}\,km wide and \CHANDEP{}\,m deep, with a \vf{crossing band of \mbox{\CHANL{}\,km}.} Sound-speed profiles \ap{are based on} the canonical Munk profile~\cite{Jensen2011}\ap{, using} deep-channel (``winter'') and warm-surface-layer (``summer'') forms\ap{. Transmission} loss at \FREQ{}\,Hz \ap{is computed using} incoherent ray-density summation with Rayleigh bottom loss and Thorp absorption \ap{and validated to within} \ISOERR{}\,dB against isovelocity spherical spreading.

\ap{A} shipping lane \ap{centered} at $\xi=\LANEX{}$\,km\ap{, with a} Gaussian \ap{scale of} \LANEW{}\,km, \ap{increases the ambient noise} level by up to \LANEDB{}\,dB, so otherwise identical sites \ap{can have different detection performance.} Candidate sites are \NSLOTS{} lateral slots at \NDEPTHS{} depths, giving $M=\NCAND{}$ \ap{candidate sites} and $\binom{\NCAND{}}{\NBUD{}}=\NSUBSETS{}$ \ap{possible deployments} for a budget of $n=\NBUD{}$ nodes. The \ap{network-wide alarm} budget is \ALARMRATE{} false alarms per hour\ap{, corresponding to} $B=\BUDGET{}$\ap{. The uniform operating point is therefore} $q=\QUNIF{}$\ap{, corresponding to} $\mathrm{DT}=\DTUNIF{}$\,dB. The \ap{detection requirement} is $\varepsilon=\EPS{}$, \ap{giving} $\tau\apx{=\ln(1/\varepsilon)}=\TAU{}$. The detector operates with $\Delta t=\DWELL{}$\,s and \ap{a bandwidth satisfying} \vf{\mbox{$\Delta t\,W=\TW{}$}}\ap{, with directivity} index $\mathrm{DI}=\DI{}$\,dB and signal-excess standard deviation $\sigma=\SIGMA{}$\,dB. The intruder speed \ap{is fixed at} $v=\SPEED{}$\,m/s \ap{as a scenario parameter, rather than optimized over a range of possible speeds to determine the worst case.}

Truncation uses $\bar q=\QMAX{}$\ap{, corresponding to} a per-node cap of \QMAXFA{} false alarms per \ap{hour, and} $\underline A=\ACUT{}$\,dB, \ap{at which the} single-dwell \ap{detection probability} is below $10^{-9}$\ap{. The minimum} curvature margin over \ap{the entire} field, all four ocean \ap{states,} and all $q\in(0,\bar q]$ is $+0.07$. With $\ell_{\max}=240$\,km, twice the channel width, a $3$\ap{-}dB taper \ap{yields} $\delta\le3.7\times10^{-3}$, \ap{whereas} the sharp \ap{cutoff} used here gives $\delta=\TRUNCBOUND{}$.

The grid \ap{contains} $\GRIDDESC{}$ nodes, with \ap{a depth resolution of} \HZ{}\,m because the field varies \ap{more rapidly} \ap{with} depth than laterally. Refining the \ap{depth-cell size} from \RESHZ{}\,m to \RESHZFINE{}\,m \ap{changes} the exact crossing \ap{cost by \mbox{\RESSPAN{}\,\%} in total}, relative to the finest grid, \ap{with only a} \RESLAST{}\,\% \ap{change over} the final \ap{refinement from \mbox{\HZ{}\,m} to \mbox{\RESHZFINE{}\,m}. This latter change is} an order of magnitude \ap{smaller than the effect of alarm allocation.}

Two modeling artifacts \ap{identified} by the path optimizer were corrected \ap{before the final evaluation. First,} tent-kernel depth binning collects \ap{only half} a cell at the boundaries\ap{;} dividing by a full cell height \ap{therefore introduced an artificial 3-dB shadow at these boundaries. Second,} range-parameterized ray integration over $\pm80^\circ$ carried no energy along near-vertical paths\ap{. To correct this artifact,} transmission loss is capped \ap{at the} direct-path \ap{spreading loss} within $2$\,km. The worst-case graph crossing in every design and ocean \ap{state considered} is a straight traverse of \CHANL{}\,km \ap{along either} the seabed or the surface\ap{. Consequently,} the endurance bound $\ell_{\max}\apx{=240\,\mathrm{km}}$ is slack by a factor of $60$ and \ap{is never} active.

\subsection{Results}

\begin{table}[t]
\caption{Designs \ap{with} $n=6$ sensors \ap{and a network-wide budget of} 6 false alarms per hour. $\Psi_G$ \ap{and} $P_{{\rm det},G}=1-e^{-\Psi_G}$ \ap{are reported for the} nominal ocean state.}
\label{tab:designs}
\centering\small
\setlength{\tabcolsep}{4pt}
\begin{tabular}{lccc}
\toprule
design & $\Psi_G$ & $P_{{\rm det},G}$ & $\operatorname{LCVaR}_{0.4}$ \\
\midrule
evenly spaced, uniform $q$ & 0.87 & 0.583 & 0.14 \\
max--min placement, uniform $q$ & 3.04 & 0.952 & 0.54 \\
\textbf{same placement, optimized $q_m$} & 3.49 & 0.970 & 0.64 \\
\bottomrule
\end{tabular}
\end{table}

\ap{Placement provides the largest improvement in this case study. As shown in} Table~\ref{tab:designs}, moving from \ap{the} evenly spaced layout to the max-min exposure optimum \ap{increases} $\Bs_G$ from \PSISPREAD{} to \PSIUNIF{}. Since $\Bs_G$ \ap{overestimates the continuum value} $\Bs$, the evenly spaced layout \ap{can be ruled out:} $\Bs\le\Bs_G+\delta\apx{\simeq}\PSISPREAD{}<\tau$, so it \ap{cannot constitute an} $\varepsilon$-barrier under any refinement. The converse does not follow for the optimized \ap{layout, since an upper bound exceeding \mbox{$\tau$} does not establish feasibility; the continuum certificate in} Section~\ref{sec:certresults} \ap{provides the required lower bound.}

\ap{This approximately} \PLACEFACTOR{}-fold \ap{improvement} is not a contribution \ap{of the present work; it results} from the max-min exposure \ap{placement method} of~\cite{Amaldi2012}. The integer \ap{cutting-plane algorithm} closes the graph-discretized placement problem to zero \ap{optimality gap} after \UNIFITERS{} iterations, \ap{thereby certifying} the optimum over all $\NSUBSETS{}$ \ap{possible} deployments.

\ap{Operating-point reallocation improves the graph objective under the same resource budget.} Redistributing the same total \ap{false-alarm budget} across the same six sites \ap{increases} $\Bs_G$ from \PSIUNIF{} to \PSIALLOC{}, a gain of \ALLOCGAIN{}\,\%, \ap{while reducing} the graph-derived worst-path miss surrogate from \MISSUNIF{}\,\% to \MISSALLOC{}\,\%. The allocation subproblem \ap{is solved} to a relative \ap{optimality gap} below $10^{-5}$. Fig.~\ref{fig:main}(a) \ap{illustrates} the concavity \ap{underlying this optimization, while Fig.\mbox{~}\mbox{\ref{fig:main}}(b) shows the resulting allocation.}

\ap{The lower-tail robustness measure exhibits an even larger relative improvement.} The \sbl{lower conditional value at risk at level \mbox{$\alpha$},} \sbl{\mbox{$\operatorname{LCVaR}_{\alpha}:=$}} $\max_t[t-\alpha^{-1}\Ex(t-\Bs_G)_+]$~\cite{RockafellarUryasev2000}, \ap{evaluated at} $\alpha=0.4$ over the four ocean states\ap{ with} probabilities $(0.3,0.3,0.2,0.2)$, rises from \CVARUNIF{} to \CVARALLOC{}, \ap{a gain of} \CVARGAIN{}\,\%. \ap{This quantity} is evaluated \ap{rather than} optimized, so \ap{its improvement is not guaranteed by the allocation procedure; instead, it arises because reallocation provides the greatest benefit in the ocean states} where the barrier is weakest.

The joint placement-and-allocation scheme did not improve on the allocation obtained for the uniform-threshold placement, and its graph-discretized bracket remained wide at $[\JOINTLO{},\ \JOINTHI{}]$. \ap{This experiment cannot determine whether} the two decisions are nearly separate \ap{in this case or whether the master problem simply requires more iterations.}

\begin{figure*}[t]
\centering
\includegraphics[width=0.80\textwidth]{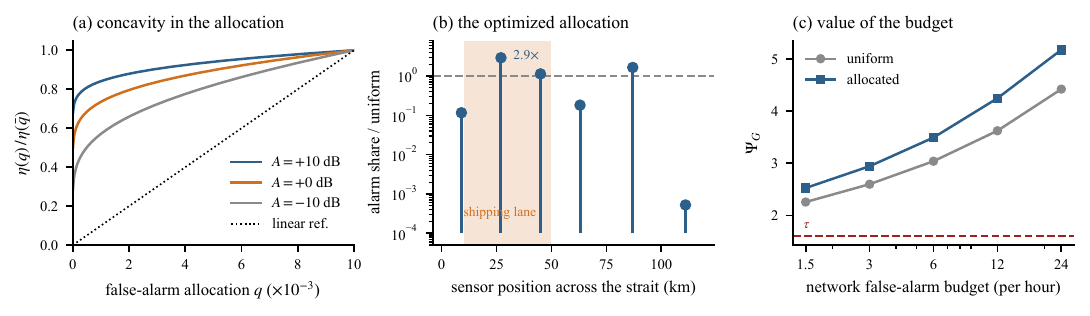}
\caption{(a) The hazard is concave in a node's false-alarm allocation over the retained operating region (Theorem~\ref{thm:concave}). (b) Optimizing the shared budget \ap{yields} a nonuniform allocation across the deployed sensors; the shaded region \ap{indicates} the shipping lane. (c) Sensitivity of the graph barrier strength to the total network false-alarm budget at the fixed max--min placement.}
\label{fig:main}
\end{figure*}

\subsection{Certification}
\label{sec:certresults}

For the multilinearly interpolated field, the uniform and allocated designs have certified brackets $[1.530,\ 3.040]$ and $[1.651,\ 3.494]$, respectively. Each upper endpoint is the exact interpolated-field cost of an explicit crossing. \ap{To obtain} the lower endpoint, an anisotropic fast-marching potential is capped at its minimum exit value and then scaled by $\rho$, the largest ratio \ap{of} a simplex gradient norm \ap{to} the corresponding local hazard lower bound\ap{, with \mbox{$\rho=1.99$} and \mbox{$2.12$} for the uniform and allocated designs, respectively.} After scaling, the certificate inequality holds on every simplex, with no region \ap{excluded. Consequently, the certified lower} bound does not depend on the accuracy of the fast-marching field itself.

\ap{The brackets} above \ap{correspond to} the nominal \ap{25-}m depth mesh, \ap{for which} \ap{the corresponding} lower bound is 47\,\% of the same-mesh \ap{upper bound. Coarsening the mesh} to 100\,m \ap{reduces this} ratio to 16\,\%, \ap{whereas refining it} to 12.5\,m raises \ap{the ratio} to 55\,\%. At the nominal resolution\ap{, the allocated-design} lower bound exceeds $\tau$\ap{, while} the uniform-design bracket straddles it\ap{. Thus, the barrier} requirement is certified for the allocated design only. \ap{The two} brackets overlap, so \ap{they do not establish an ordering between the corresponding} continuum values.

Holding the placement fixed at the uniform-threshold optimum and varying the network false-alarm rate gives Fig.~\ref{fig:main}(c). The logarithmic sweep stops at 24 \ap{false alarms per hour because the} next doubling would exceed the uniform-design cap of 43.2 per hour. \ap{Viewed conversely, with} uniform thresholds, matching the barrier \ap{achieved by reallocation} at 6 false alarms per hour would \ap{require approximately 10 false alarms} per hour, or 1.7 times the operator load. The gain also rises with \ap{the false-alarm} budget, from 11.9\,\% at the tightest rate \ap{considered} to 17.0\,\% at the loosest, where it plateaus as nodes begin to reach the per-node cap.

\section{\sblx{Discussion}}
\label{sec:discussion}

Four limitations bound the \ap{claims of this work. First, detections} are treated as conditionally independent across dwells and nodes; under correlated clutter\ap{, however, the budget itself remains a simple sum by} linearity of expectation. \ap{Second, fusion is not modeled,} and $m$-of-$n$ logic \ap{could allow} individual nodes \ap{to operate at considerably looser thresholds. Third, the continuum} certificate is valid but conservative, and the \ap{joint optimization} bracket \ap{remains wide. Finally, the} numerical results \ap{represent a single case study:} the structural results do not depend on \ap{this particular instance, but the magnitude of the observed improvement} does.

\section{Conclusion}
\label{sec:conclusion}

Detector operating points \ap{constitute} a network resource that can be co-designed with sensor placement. Under a shared false-alarm budget, the problem \ap{admits} a concave inner allocation \ap{over} the retained \ap{operating region,} whose optimum equalizes \ap{the marginal} barrier gain per unit alarm against a worst-case occupation \ap{measure. The joint problem further admits} a perspective structure that \ap{provides certified bounds on its optimum value.} \sbl{On the deep-water case study, optimizing the placement raises the graph barrier strength by a factor of \mbox{\PLACEFACTOR{}} over an evenly spaced layout, and reallocating the same false-alarm budget over that placement raises it by a further \mbox{\ALLOCGAIN{}\,\%}, with no additional sensors and no increase in operator alarm load. The formulation also contains the familiar models as special cases: the requirement reduces to the \mbox{$\varepsilon$}-barrier condition through \mbox{$\Bs\ge\tau$}, and the binary sensing model is recovered in the limit of perfect detection inside a sensing region and none outside.}

Natural extensions \ap{include} $m$-of-$n$ fusion \ap{under} spatially correlated clutter, sequential detection with a budget on time to alarm\ap{, and post-deployment} reallocation as the ocean state is \ap{estimated.}

\bibliographystyle{IEEEtran}
\bibliography{barrier_coverage_refs}

\end{document}